\newtheorem{definition}{Definition}
\newtheorem{lemma}{Lemma}
\newcommand{\myparagraph}[1]{\vspace{0.6\baselineskip}\noindent{\textbf{#1}}.~}
\newcommand{\tf}[2]{tf(#1,#2)}
\newcommand{\simFnSymbol}{S}
\newcommand{\simFn}[2]{\simFnSymbol(#1,#2)}
\newcommand{\lm}{\theta}
\newcommand{\userUtil}[2]{U(#1,#2)}
\newcommand{\adUtil}[1]{U^{a}_{#1}}
\newcommand{\omt}[1]{}
\newcommand{\question}{q}
\newcommand{\numAds}{n}
\newcommand{\winner}{i^{w}}
\newcommand{\second}{i^{snd}}
\newcommand{\adi}{d_i}
\newcommand{\SCi}{d_{i}^{s}}
\newcommand{\SCj}{d_{j}^{s}}
\newcommand{\SCis}{d_{i^{w}}^{s}}
\newcommand{\SCit}{d_{\second}^{s}}
\newcommand{\pure}{d_{o}}
\newcommand{\voc}{V}
\newcommand{\textk}{V^k}
\newcommand{\inN}{i \in N}
\newcommand{\valuei}{v_i(\SCi)}
\newcommand{\valuej}{v_j(\SCj)}
\newcommand{\valueis}{v_{\winner}(\SCis)}
\newcommand{\valueit}{v_{\second}(\SCit)}
\newcommand{\bidi}{b_i(\SCi)}
\newcommand{\bidis}{b_{i^{w}}(\SCis)}
\newcommand{\bidit}{b_{\second}(\SCit)}
\newcommand{\bidj}{b_{j}(\SCj)}
\newcommand{\payi}{p_i(\SCi)}
\newcommand{\payis}{p_{i^{w}}(\SCis)}
\newcommand{\simdi}{S(\SCi, \adi)}
\newcommand{\simdo}{S(\SCi, \pure)}
\newcommand{\setup}{QAS := (\question,\pure, N, \{\adi\}_{i\in N}, \{\SCi\}_{i\in N}, \{\valuei\}_{i\in N})}
\newcommand{\VP}{PV}
\newcommand{\ada}{d_{1}}
\newcommand{\adb}{d_{2}}
\newcommand{\SCa}{d_{1}^{s}}
\newcommand{\SCb}{d_{2}^{s}}
\newcommand{\valuea}{v_{1}(\SCa)}
\newcommand{\valueb}{v_{2}(\SCb)}
\newcommand{\un}[2]{p_{#1}(#2)}
  \providecommand\BibTeX{{%
    \normalfont B\kern-0.5em{\scshape i\kern-0.25em b}\kern-0.8em\TeX}}}
\renewcommand\footnotetextcopyrightpermission[1]{} 
\begin{document}

\title{Sponsored Question Answering}

\author{Tommy Mordo}
\affiliation{%
  \institution{Technion}
  \city{Haifa}
  \country{Israel}
}
\email{tommymordo@technion.ac.il}

\author{Moshe Tennenholtz}
\affiliation{%
  \institution{Technion}
  \city{Haifa}
  \country{Israel}
}
\email{moshet@technion.ac.il}

\author{Oren Kurland}
\affiliation{%
  \institution{Technion}
  \city{Haifa}
  \country{Israel}
}
\email{kurland@technion.ac.il}
\fancyhead{}
\renewcommand{\shortauthors}{Tommy Mordo, Moshe Tennenholtz, \& Oren Kurland}

\begin{abstract}
The potential move from search to question answering (QA) ignited the
question of how should the move from sponsored search to sponsored QA
look like. We present the first formal analysis of a sponsored QA
platform. The platform fuses an organic answer to a question with an ad
to produce a so called {\em sponsored answer}. Advertisers then bid on
their sponsored answers. Inspired by Generalized Second Price Auctions
(GSPs), the QA platform selects the winning advertiser, sets the payment she pays, and shows the user the sponsored answer. We prove an array of
results. For example, advertisers are incentivized to be truthful in
their bids; i.e., set them to their true value of the sponsored answer. The resultant setting is stable with properties of VCG auctions.
\end{abstract}

\maketitle

\section{Introduction}
In sponsored search \cite{Jansern+Nullen:08a}, advertisements
(henceforth referred to as ads) are ranked in response to a query. A
click on an ad usually leads to a landing page on the Web. In contrast
to organic search, the ranking is not based on relevance estimation
but rather on monetization criteria. For example, by the foundational
General Second Price (GSP) Auction
\cite{edelman_internet_2007,varian_position_2007}, ads are ranked by
bids posted by the advertisers\footnote{Advertisers bid on keywords. If a keyword appears in a query then the ad is candidate for ranking.} where the payment of an advertiser is the bid posted by
the advertiser ranked just below her. More evolved versions include
measures of ad quality (e.g., predicted click through rate) in the
ranking criterion \cite{varian_position_2007,edelman_internet_2007}.

Recently, there is an on-going discussion, and some implementations,
of using question answering systems as potential alternatives to
search engines \cite{metzler_rethinking_2021,coldewey,you}. This is
due to the dramatic progress with large language models (LLMs)
\cite{zhao_survey_2023}. An emerging question is then how sponsored
search will evolve with the transition from search to question
answering. A few recent suggestions include showing ads during a QA
conversation (chat), providing the ad as an answer in a QA session, and
integrating an answer with an ad
\cite{coldewey,noauthor_new_nodate,feizi_online_2024,zelch_user_2024}.

However, to the best of our knowledge, there is still no theoretical
treatment of potential mechanisms for sponsored QA. In this paper, we
make a first step towards this end. We present a formal auction
mechanism for a sponsored QA platform. The use of ads in the QA
platform we study is inspired by Feizi's et
al. \cite{feizi_online_2024} conceptual proposal, by Zelch et al.'s study \cite{zelch_user_2024} and by Microsoft's
alleged approach \cite{perloff_microsoft_2023} to fuse (integrate) answers with ads. 
For example, suppose the 
question {\em ``Which book is used in most CS programs in the U.S. to teach Python?''} has an answer {\em ``Book
  X''} generated by some LLM. Suppose there is an ad about a specific bookstore Y: {\em ``Bookstore Y has all the programming books one might need to thoroughly learn how to program. For example, we highly recommend book X which consistently receives excellent reviews''}. The result of fusing the answer and the ad can be: {\em ``In bookstore Y you can find book X which is used by most CS programs in the U.S. to teach Python. The book consistently receives excellent reviews.''}

Our suggested QA platform operates as follows. Given a user question, the platform generates an {\em
  organic answer} which is independent of monetization
considerations, e.g., using an LLM. The platform fuses the answer with each of the
candidate ads to produce so called {\em sponsored answers}. Then, each
advertiser places a bid for her sponsored answer. Inspired by the
GSP auctions used in sponsored search
\cite{varian_position_2007}, we devise a criterion for selecting the
sponsored answer and the payment the selected advertiser should
pay. We prove that a {\em dominant strategy} of advertisers is to
place a bid equal to the value of the sponsored answer for them. This
result which is a property of VCG auctions \cite{MasColell+al:95a} is 
important since the auction setting is in a stable (equilibrium)
state: advertisers are incentivized to be truthful in their bids
rather than engage in {\em shading} \cite{Nautz+al:97a} --- i.e., continuously changing bids to improve gains.


We further define a notion of {\em social welfare} which is an
aggregate of the value of a sponsored answer to the advertiser and the
user utility attained by this answer. Then, in the spirit of VCG
auctions \cite{MasColell+al:95a}, we show that under the dominant
strategy, the utility of the winning advertiser (defined as the
difference between the value and the payment) is the difference
between the social welfare given her sponsored answer and the social
welfare given the sponsored answer of the second best advertiser whose
bid is used to determine the winner's payment.

Our formal analysis is agnostic to the fusion approach employed to
integrate the organic answer and an ad. We discuss a few potential
fusion methods using large language models (LLMs). However, formally
coupling the result of LLM-based fusion and the auction mechanism is
extremely difficult to impossible. Hence, to provide an end-to-end formal
analysis of the process that starts with fusion and continues with the
auction, we use unigram language models. We formally show that the
advertiser with the highest bid is not necessarily the winner and that
the winning sponsored result does not necessarily maximize user
utility. The latter result can be attributed to the fact that we
devise the platform to maximize both users' and advertisers'
satisfaction.

To summarize, our main contributions are (i) presenting the first
--- to the best of our knowledge --- formal proposal of a sponsored QA
system which is based on an auction mechanism, and (ii) proving a few important
theoretical results about the platform.

\section{Related Work}
We assume that the QA platform generates an organic answer to a
question, e.g., using a large language model (LLM). Indeed, question answering is a
prominent application of using large language models
\cite{zhao_survey_2023}. We note that answer generation is not the focus of our
work here. Any LLM can be used to generate the organic answer in our suggested platform.

Sponsored search is central to the monetization of search engines
\cite{arnau_digital_2023}. In sponsored search, ads are ranked in
response to a query. In contrast, following recent proposals \cite{perloff_microsoft_2023,feizi_online_2024,zelch_user_2024}, we assume that an ad is fused with an organic answer to a question. 

Two prevalent auctions on the Web are VCG \cite{MasColell+al:95a} and
Generalized Second Price (GSP) auctions
\cite{edelman_internet_2007,varian_position_2007}. These auctions are
equivalent when a single advertisement is selected
\cite{vickrey_counterspeculation_1961,edelman_internet_2007}. In this
case, the highest bidding advertiser wins, and the payment is the
minimum bid necessary to secure the win, specifically, the bid of the
advertiser with the second highest bid. Our suggested auction is adaptation of a GSP auction to the sponsored QA setting. Since our QA platform strives
to satisfy both users and advertisers, the winning advertiser is not
necessarily the one who posted the highest bid as we show.



There is work on content generation using multiple LLMs where an
incentive-compatible auction\footnote{In these auctions, bidders are
  incentivized to bid their true value.} is held for {\em each}
generated token \cite{duetting_mechanism_2023}. Our platform is
fundamentally different as bids are posted for a given question and a
sponsored answer created from an ad and an organic
answer. Furthermore, in contrast to our work, there is strong coupling
between the content generation approach and the auction
\cite{duetting_mechanism_2023}. Our platform is agnostic to the QA
algorithm used to generate an organic answer and the approach used to
fuse it with an ad. Furthermore, in contrast to
Duetting et al. \cite{duetting_mechanism_2023}, we define and analyze social welfare.

Dubei et al. \cite{Dubei+al:24a} devise an auction mechanism where ads
of advertisers are summarized using an LLM. The advertisers
essentially bid on the prominence of their ad in the summary. In
contrast, in our QA platform a single sponsored answer created from an
ad and an organic answer is presented to the user. Hence, the
auction mechanisms are completely different and so is the
corresponding analysis.

Recently, Feizi et al. \cite{feizi_online_2024} proposed a conceptual
framework for online advertising in LLM-based QA systems. They discuss
important considerations including privacy, latency, reliability, users'
and advertisers satisfaction. However, no formal/algorithmic framework
is proposed in their work. We adopt Feizi et al.'s \cite{feizi_online_2024} proposal
of fusing an organic answer with an ad. This approach was also studied by Zelch et al. \cite{zelch_user_2024} and is also allegedly the
state-of-affairs in some implementations \cite{perloff_microsoft_2023}.
Our focus in this paper is on devising an auction mechanism and analyzing it using several perspectives. The important issues discussed by Feizi et al. \cite{feizi_online_2024} are outside the scope of this paper, but certainly deserve in-depth future exploration. A case in point, the reliability of sponsored answers generated from organic answers and ads is highly important.

Zelch et al. \cite{zelch_user_2024} performed a user study of a
setting where an organic answer is fused with an ad using large
language models. They analyzed the user judgements of the quality and
relevance of the resultant sponsored answers and the users' ability to
detect sponsored content. We focus on a theoretical analysis of the
auction mechanism. The quality and/or relevance of the sponsored answers obviously plays a major role in the user utility function which is part of our analysis.

Finally, we note that large language models were also recently used to detect automatically generated native advertisements
\cite{schmidt_detecting_2024}.



\section{Sponsored Question Answering Platform}
\label{sec:model}
We now turn to present a question answering (QA) platform which 
responds to users' questions with answers which include sponsored information. 


Suppose that a user posts a question $\question$ and the QA platform
generates an {\em organic} (textual) answer $\pure$ which is not affected by
monetization considerations. There are $\numAds$ advertisers (content
providers) interested in presenting their {\em textual} advertisements
(ads in short) in response to $\question$.\footnote{The advertisements
  can include images and video captions, but henceforth we focus on
  the textual part,} We assume that each advertiser $i$ ($\in
N:=\{1,2,...,n\}$) has a single ad $\adi$ she wants to present. The
platform fuses for every advertiser $i$ her ad ($\adi$) with the
organic answer $\pure$ to produce a {\em sponsored answer} $\SCi$
which will be shown to the user in case advertiser $i$ is
selected. The fusion process can be advertiser-specific and be based
on commercial terms between the platform and the advertiser. A case in
point, the relative emphasis on the organic answer versus that on the ad can be the result of a commercial agreement.  The platform then runs an auction in which the advertisers
bid on the corresponding sponsored answers.
The platform uses the
bids with additional information to determine the winner of the
auction and her payment.


Since auctions are essentially games composed of players (bidders) and
their strategies (bids) we start by describing in Section \ref{sec:gt}
some basic game theory concepts. In Section \ref{sec:auctions} we formalize
the auction mechanism and present some theoretical results. The
formalism and results are agnostic to the actual fusion approach
employed to fuse an organic answer and an ad. In Section
\ref{sec:unigram} we discuss two options of fusion using large
language models, and in addition provide a theoretical analysis of
the results of using unigram language models for fusion.

\subsection{Game Theory}
\label{sec:gt}
We now briefly review some basic concepts in game theory. 
\begin{definition}
  An n-players game is a tuple $G=(\{S_i\}_{i\in N}, \{U_i\}_{i\in N})$. $S_i$ is the set of strategies of player $i$.
  $S=S_1 \times S_2 \times ... \times S_n$ is the set of \textbf{strategy profiles} in the game. $U_i:S\rightarrow \mathbb{R_{+}}$ is the \textbf{utility function} of player $i$.\footnote{$\mathbb{R_{+}}$ is the set of non-negative real numbers.}
\end{definition}

Each player in a game aims to maximize her utility. Note that the utility
depends not only on her strategy but also on the strategies of other
players. Consider the following 2-players game:
$S_1=S_2=[0,1]$, $U_1(s_1, s_2) = 1 + s_1^2 + s_2^2, U_2(s_1, s_2) = 2 + s_1^2 + s_2^2$. If the players select $s_1=1$ and $s_2=1$, then their utilities are
$U_1(1,1) = 3$ and  $U_2(1,1) = 4$.

A fundamental characterization of games is their stability or lack
thereof. A {\em stable} strategy profile of a game is a profile where
no player has an incentive to deviate from her strategy. A well known example of a stable profile is Nash equilibrium. Let $S_{-i} = S_1\times...\times S_{i-1} \times S_{i+1} \times ... \times S_n$ denote the strategy profile of all the players except $i$. Then,
\begin{definition}
  \label{def:nash}
 A strategy profile $s=(s_i, s_{-i})$ where $s_i \in S_{i}, s_{-i} \in S_{-i}$ is a Nash equilibrium  if$~~$ $\forall i \in N$ and $\forall s_i' \in S_i$, $U_i(s_i', s_{-i}) \leq U_i(s_i, s_{-i})$.\footnote{We focus on pure strategies and pure Nash equilibrium. Mixed strategies which are distributions over pure strategies are outside the scope of this paper.}
\end{definition}
Considering the game example from above, it is easy to show that the strategy profile $s_1=1, s_2=1$ is a Nash equilibrium.


A {\em dominant strategy} of a player is a strategy that is better than any other strategy regardless of the strategies of other players:
\begin{definition}
$s_i \in S_i$ is a dominant strategy of player $i$ if$~~$ $\forall s_{-i} \in S_{-i}, \forall s'_{i} \in S_i, U_i(s'_{i}, s_{-i}) \leq U_i(s_i, s_{-i})$.
\end{definition}
Note that players' strategies in a Nash equilibrium (Definition \ref{def:nash}) need not
necessarily be dominant strategies. In our example game, $s_1=1$ is a
dominant strategy of the first player, since $U_1(1,s_2) \geq U_1(s_1,
s_2)$~~$ \forall s_1,s_2 \in [0,1]$.

A common measure of the "goodness" of a strategy profile $s$ is {\em
  social welfare}, often defined as the sum of the players' utilities:
\begin{equation}
  \label{eq:social}
   SW(s) = \sum_{i \in  N} U_i(s).
\end{equation}

\subsection{Auctions for Sponsored Answers}
\label{sec:auctions}
We next turn to formalize the auction employed by the QA
platform. Recall that prior to the auction, each ad $\adi$ is fused
with the organic answer $\pure$ to yield a sponsored answer
$\SCi$. Advertiser $i$ then places a bid for $\SCi$ to be shown. The
platform selects the sponsored answer to show and determines the
payment of the winning advertiser. We note that fusing organic (non strategic) content and
strategic content (ads) is a
fundamental difference with work on sponsored search where only strategic content is used \cite{varian_position_2007}. Accordingly, the auction setting is significantly different.

\omt{
We define $n$ fixed possible sponsored contents, one for each
advertiser as $\{\SCi\}_{i \in N}, \forall i\in N : \SCi
\in\textk$. those sponsored contents can be determined through prior
negotiations between the platform and the advertisers. Later on, we
will demonstrate several possible processes for generating those
sponsored contents.

We begin by outlining the setup of sponsored content in QA systems. Next, we describe the mechanism for selecting the advertiser and his sponsored content, as well as establishing how payments are determined. Subsequently, we will define the utilities of various stakeholders, specifically the user, the platform, and the advertisers. Together, these elements will constitute the framework of a game, where each advertiser's strategy of every player is his bid on a specific sponsored content to be displayed to the user.

}
\begin{definition}
  \label{def:setup}
A question-specific setup of a question-answering (QA) platform with sponsored answers is a tuple\\
$\setup$. $\question$ and $\pure$ are a question and the organic
answer generated for it, respectively. $N:=\{1,2,...,n\}$ is a set of advertisers where
$n\geq2$. $\{\adi\}_{\inN}$ are the advertisers' ads for $\question$.
$\SCi$ is the sponsored answer generated by fusing the organic answer $\pure$ with ad $\adi$. $\valuei\in\mathbb{R_{+}}$ is the value for advertiser $i$
of displaying the sponsored answer $\SCi$ to the user.
\end{definition}

We assume the QA platform has to select a single sponsored answer from
$\{\SCi\}_{\inN}$ to show the user. This practice is aligned with a setting where the platform shows a single organic answer.
As in classical sponsored search \cite{Jansern+Nullen:08a}, the
platform is incentivized to receive a payment based on the content it
presents to users. We now turn to define the utilities of the
stakeholders --- users, advertisers and the platform itself --- and
propose an auction mechanism to determine the winner and her
payment.





We focus on a setup where the (game) strategy $s_i$ of every
advertiser $i$ is a bid $\bidi$ ($\in \mathbb{R_{+}}$) for showing
the sponsored answer $\SCi$ to the user. The set of {\em strategy
  profiles} of the advertisers is $S=S_1 \times S_2 \times ... \times
S_n$ where $S_i$ is the set of all possible bids of advertiser $i$;
namely, $\mathbb{R_{+}}$. We adapt GSP (generalized second price auction) \citep{varian_position_2007} to our setting for the allocation and payment procedures.

The platform's value for a sponsored answer $\SCi$ is an aggregation of the user utility and the advertiser's bid on $\SCi$:
\begin{definition}
  \label{def:platformValue}
The prospect \textbf{Platform Value} (henceforth, platform value in short) from displaying the sponsored answer $\SCi$ is defined as $\VP(\SCi, \question, \bidi) = \userUtil{\SCi}{\question} + \bidi$, where $\userUtil{\SCi}{\question}$ is the user utility attained by seeing the sponsored answer $\SCi$ in response to her question ($\question$), and $\bidi$ is the bid of the advertiser $i$ on the sponsored answer $\SCi$.
\end{definition}
We use the term ``prospect'' to emphasize that the payment received by
the platform can be lower than $\bidi$ as explained below. We next
define social welfare with respect to a sponsored answer as the sum of the user
utility and the value for the advertiser whose sponsored answer was
selected\footnote{It is standard practice in economics to account for the utilities of different stakeholders.}. Using alternative definitions of social welfare and analyzing the resultant auction is left for future work. In standard GSPs for sponsored search
\cite{varian_position_2007}, the social welfare is a weighted sum of
advertisers' values (cf., Equation \ref{eq:social}) where the weights
represent click probabilities\footnote{In standard GSPs for sponsored
  search, ads are ranked. In our setting, a single ad is presented,
  and only its advertiser has potential gain.}. Here we do not account
for click probabilities. We do take into account, in contrast to
standard GSPs, the user's utility.
\begin{definition}\label{sw_definition}
  The \textbf{Social Welfare} from displaying the sponsored answer $\SCi$ is $SW(\SCi, \question, \bidi) = \userUtil{\SCi}{\question} + \valuei$.
\end{definition}
Note that the social welfare is composed of a user part (utility) and advertiser part (value). Indeed, the platform aims at satisfying both users and advertisers.

\omt{
In those definitions we assume the values of both the advertiser $i$ and the user from showing the sponsored content $\SCi$ are proportional to the similarities to the original ad $\adi$ and the pure answer $\pure$, respectively. Note, that our framework and the coming proof holds for a more general case which the advertiser value and user value are not necessarily defined as the similarity between the sponsored content to the pure answer and the original advertisement, respectively. Specifically, $\simdi$ and $\simdo$ can be replaced with general $\valuei$, $u_i(\SCi)\in\mathbb{R_{+}}$ and all the formalism holds.

***** Moshe: not sure there are such "winning references"; in fact about distance  to information need it is more like in dense retrieval, isn't it? We can also cite the paper in arxiv on the search of stability ****
*** Tommy: I will add search for stability**

**** Moshe: the below is a bit too apologetic, We use strategies as bids and the GSP format  as the allocation and payment procedures following Varian's original work (cite the paper on position auctions). We should though say that the allocation and payment in our mechanism  would depend on the lambdas, as any sets of Lambdas define a different game. Hope this will clear... ***** 
***Tommy: i rephrase everything according to that***
}
The advertiser selected as the {\em winner} of the auction, $\winner$, is the one whose sponsored answer and bid maximize the platform value:
\begin{equation}
  \label{eq:winner}
    \winner :=  arg\max_i \VP(\SCi, \question, \bidi).
\end{equation}
The platform shows the sponsored answer $\SCis$.

In GSP auctions \cite{varian_position_2007}, the winner pays the
minimal payment $\payis\leq \bidis$ needed to surpass the second best
platform value, attained by advertiser $\second$. Formally,\\
$\VP(\SCit, \question, \bidit) \leq \VP(\SCis, \question, \bidis)$ and\\ $\VP(\SCit, \question, \bidit) = \VP(\SCis, \question, \payis)$.
Based on these definitions, we arrive to the following result:
\begin{lemma}\label{payment}
The payment $\payis$ of the winner $\winner$ is$~$ $\bidit +  \userUtil{\SCit}{\question}  -  \userUtil{\SCis}{\question}$.
\end{lemma}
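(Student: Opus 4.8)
The plan is to read the payment off directly from the two GSP conditions stated just above the lemma, using only the definition of platform value (Definition~\ref{def:platformValue}). Recall that the winner's payment $\payis$ is \emph{defined} by the pair of conditions $\VP(\SCit, \question, \bidit) \leq \VP(\SCis, \question, \bidis)$ and $\VP(\SCit, \question, \bidit) = \VP(\SCis, \question, \payis)$; the second of these, the equality, is what pins down the price, so that is where I would start.

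First I would expand both sides of that equality with $\VP(\SCi,\question,\bidi) = \userUtil{\SCi}{\question} + \bidi$. The left-hand side becomes $\userUtil{\SCit}{\question} + \bidit$ and the right-hand side becomes $\userUtil{\SCis}{\question} + \payis$. Equating them gives $\userUtil{\SCit}{\question} + \bidit = \userUtil{\SCis}{\question} + \payis$, and isolating $\payis$ yields $\payis = \bidit + \userUtil{\SCit}{\question} - \userUtil{\SCis}{\question}$, which is exactly the asserted formula. This is a one-line rearrangement once the definition is substituted.

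The only points worth a sentence of commentary are well-definedness rather than any genuine difficulty. I would remark that the first (inequality) condition $\VP(\SCit,\question,\bidit)\le\VP(\SCis,\question,\bidis)$ together with the derived formula gives $\payis\le\bidis$, so the winner is charged no more than her own bid, consistent with the GSP interpretation of ``the minimal payment needed to retain the winning slot''; and that a tie-breaking convention should be fixed so that $\second$ (the advertiser attaining the second-highest platform value) is uniquely determined. Since the lemma is an immediate algebraic consequence of Definition~\ref{def:platformValue} and the defining conditions for $\payis$, there is no real obstacle here — the ``work'' is simply making explicit that the payment is exactly pinned down by the stated equality, not merely bounded.
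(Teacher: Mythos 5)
Your proof is correct and follows exactly the paper's own argument: substitute Definition~\ref{def:platformValue} into the equality $\VP(\SCit,\question,\bidit)=\VP(\SCis,\question,\payis)$ and rearrange to isolate $\payis$. The additional remarks on $\payis\le\bidis$ and tie-breaking are sensible but not needed for the lemma itself.
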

\begin{proof}
The winner $\winner$ pays the minimal $\payis$ s.t. the resultant platform value is the second best. 
Since\\ $\VP(\SCit, \question, \bidit) = \VP(\SCis, \question, \payis)$, then $\userUtil{\SCit}{\question} + \bidit  =  \userUtil{\SCis}{\question} + \payis$
and therefore $\payis = \bidit + \userUtil{\SCit}{\question}  -  \userUtil{\SCis}{\question}$.
\end{proof}
We now define the utilities of advertisers. Note that the utility
of an advertiser depends on the bids of all other advertisers, not
only her own; specifically, the payment of the winner depends on the bid of another advertiser. Indeed, recall from Section \ref{sec:gt} that the
utility of a player in a game depends on the strategies employed by
all players.

Following common practice in auction theory \cite{Myerson:81a}, we
assume quasi-linear utility functions of the advertisers: we
substract the payment from the value to determine the advertiser
utility if she is the winner, and set the utility to $0$ otherwise.
\begin{definition}
The utility of advertiser $i$ is:
\begin{equation}
\adUtil{i}(\SCi,\valuei,\{\SCj\}_{j \ne i},\{\valuej\}_{j \ne i}) = 
\begin{cases}
    \valuei - \payi & \text{if i wins},\\
    $0$  &\text{otherwise};
\end{cases}
\end{equation}
$\payi$ is  $i$'s payment in case she wins.
\end{definition}


The auction setting we defined can be analyzed as a game where the
advertisers' strategies are their bids and they strive to maximize
their utility. We next show that advertisers have an incentive to
behave truthfully, that is, bid the true value of their sponsored
answer. More specifically, this bidding strategy is a dominant
strategy. As a result, the game (auction) has a Nash equilibrium
(i.e., the game is stable). This stability is highly important. A case in point, in auctions which are not incentive-compatible (i.e, bidders are not incentived to bid their true value) {\em shading} phenomena \cite{Nautz+al:97a} are prevalent: bidders continuously addapt their bids (specifically, lower than the true value) to maximize gain.
\begin{lemma}
  \label{dominant}
$\bidi=\valuei$ is a dominant strategy of advertiser $i$.
\end{lemma}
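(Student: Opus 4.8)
The plan is to run the classical second-price / VCG dominant-strategy argument, adapted to the fact that the quantity the platform ranks is the platform value $\VP(\SCi,\question,\bidi)=\userUtil{\SCi}{\question}+\bidi$ rather than the raw bid. Fix advertiser $i$ and fix an arbitrary bid profile $\{\bidj\}_{j\ne i}$ of the competitors (this exists as a nonempty choice since $n\ge 2$). Write $u_i:=\userUtil{\SCi}{\question}$, and note that $u_i$ depends only on $\SCi$ and $\question$, not on any bid; let $M_{-i}:=\max_{j\ne i}\VP(\SCj,\question,\bidj)$ be the best platform value attainable by the others. The first step is to pin down $i$'s utility as a function of her own bid $\bidi$: combining the winner rule (Equation~\ref{eq:winner}) with Lemma~\ref{payment}, if $\bidi$ makes $i$ the winner then the second-best platform value is exactly $M_{-i}$ and $\userUtil{\SCis}{\question}=u_i$, so her payment is $\payi=M_{-i}-u_i$ and her utility is $\valuei-\payi=(u_i+\valuei)-M_{-i}$ --- a quantity that does \emph{not} depend on the actual value of $\bidi$; and if $\bidi$ makes $i$ a loser, her utility is $0$.

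The second step reads off from Equation~\ref{eq:winner} \emph{when} each outcome occurs: up to the tie-breaking convention, $i$ wins precisely when $u_i+\bidi\ge M_{-i}$. Hence bidding $\bidi=\valuei$ makes $i$ a winner exactly when $(u_i+\valuei)-M_{-i}\ge 0$, so truthful bidding secures utility $\max\{0,\,(u_i+\valuei)-M_{-i}\}$. The third step compares this to an arbitrary deviation $\bidi'$: by the first step the deviation yields either $0$ (if $i$ loses under $\bidi'$) or $(u_i+\valuei)-M_{-i}$ (if $i$ wins under $\bidi'$), and in either case this is at most $\max\{0,\,(u_i+\valuei)-M_{-i}\}$, which truthful bidding already attains. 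Since $\{\bidj\}_{j\ne i}$ was arbitrary, $\bidi=\valuei$ is a dominant strategy.

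The one delicate point --- and where I would be careful --- is the tie-breaking rule hidden inside the $arg\max$ of Equation~\ref{eq:winner} and inside the choice of the second-best advertiser $\second$. When several advertisers attain the maximal platform value the winner's identity is not pinned down, but this is harmless: at such a tie $u_i+\bidi=M_{-i}$, so the ``winner utility'' $(u_i+\valuei)-M_{-i}$ is $0$ whenever the tie already arises at the truthful bid, matching the loser payoff; and a deviation $\bidi'$ that newly turns $i$ into a winner satisfies $u_i+\bidi'\ge M_{-i}$ while possibly $u_i+\valuei\le M_{-i}$, so the payoff it gains, $(u_i+\valuei)-M_{-i}$, is $\le 0$ and cannot help. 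I would therefore fix an arbitrary tie-breaking rule up front, dispose of these boundary cases in a line, and then the three-step comparison above goes through verbatim.
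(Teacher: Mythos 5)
Your proof is correct and follows essentially the same route as the paper's: both fix the competitors' best platform value $M$, use Lemma~\ref{payment} to show the winner's utility $\valuei+\userUtil{\SCi}{\question}-M$ is independent of her own bid, and then argue that truthful bidding wins exactly when this quantity is nonnegative, so no deviation can improve on $\max\{0,\valuei+\userUtil{\SCi}{\question}-M\}$. Your version is in fact slightly more careful than the paper's (which leaves the tie-breaking convention implicit and states the losing case somewhat tersely), but the underlying argument is the same.
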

\begin{proof}
We show that the utility of player $i$ is maximized when $\bidi=\valuei$.
Let $M=max_{j\neq i} \VP(\SCj, \question, \bidj)$ be the maximal platform value attained by selecting a player $i' \neq i$.
There are two options for the utility of advertiser $i$:
\begin{itemize}
    \item $i$ is not the winner: $\VP(\bidi) \leq M$. In this case, the utility of player $i$ is $U_i(\bidi) = 0$ and $\bidi=\valuei$ is a dominant strategy.
    \item $i$ is the winner: $\VP(\bidi) > M$. Hence, $j=\second$ and $M = \bidit + \userUtil{\SCit}{\question}$. By Lemma \ref{payment},  $\payi = M - \userUtil{\SCi}{\question}$. The utility of $i$ is then $\valuei - \payi = \valuei - M + \userUtil{\SCi}{\question} $.
We analyze two cases :
\begin{itemize}
    \item if $\valuei - M + \userUtil{\SCi}{\question}$ < 0 then $\valuei
      < \userUtil{\SCit}{\question} + \bidit -
      \userUtil{\SCi}{\question}$ and $\valuei +
      \userUtil{\SCi}{\question} < \bidit +
      \userUtil{\SCit}{\question}$. If $\bidi \leq \valuei$, we get
      $\bidi + \userUtil{\SCi}{\question} < \bidit +
      \userUtil{\SCit}{\question}$. Consequently, $\second$ is the
      winner and we get a contradiction.
      \item if $\valuei - M + \userUtil{\SCi}{\question} \geq 0$, then $i$ can maximize her utility by winning the game having $\valuei - \payi \geq 0$; to this end, $i$ can bid $\bidi = \valuei$ since $\valuei + \userUtil{\SCi}{\question} \geq M$. 
\end{itemize}
\end{itemize}
\end{proof}
It is easy to show that the utility of the winner is the difference between the social welfare if her sponsored answer is selected and the social welfare if the sponsored answer of the second best advertiser (in terms of platform value) is selected. This is a property of VCG auctions \cite{MasColell+al:95a}. Formally,
\begin{proposition}\label{obs_sw}
  Assume all advertisers play their the dominant strategy: $\bidi = \valuei$. The utility of the winner $\winner$ is:
  $\valueis - \valueit +  \userUtil{\SCis}{\question} - \userUtil{\SCit}{\question} = SW(\SCis, \question, \bidis) - SW(\SCit, \question, \bidit)$.
\end{proposition}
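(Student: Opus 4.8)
The plan is a short direct computation that chains together Lemma~\ref{dominant}, Lemma~\ref{payment}, and Definition~\ref{sw_definition}. First I would invoke Lemma~\ref{dominant}: since every advertiser plays $\bidi=\valuei$, in particular the runner-up in platform value bids $\bidit=\valueit$. Because $\winner$ is the winner of the auction, her utility is, by the definition of the quasi-linear advertiser utility, the quantity $\adUtil{\winner}=\valueis-\payis$.

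Next I would substitute the payment. By Lemma~\ref{payment}, $\payis=\bidit+\userUtil{\SCit}{\question}-\userUtil{\SCis}{\question}$; replacing $\bidit$ by $\valueit$ (truthful bidding) gives $\payis=\valueit+\userUtil{\SCit}{\question}-\userUtil{\SCis}{\question}$. Plugging this into $\adUtil{\winner}=\valueis-\payis$ yields exactly $\valueis-\valueit+\userUtil{\SCis}{\question}-\userUtil{\SCit}{\question}$, i.e.\ the first expression in the claim.

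Finally I would rewrite this in terms of social welfare. By Definition~\ref{sw_definition}, $SW(\SCis,\question,\bidis)=\userUtil{\SCis}{\question}+\valueis$ and $SW(\SCit,\question,\bidit)=\userUtil{\SCit}{\question}+\valueit$. Subtracting these, the two user-utility terms and the two value terms regroup into precisely the expression obtained in the previous step, which establishes the stated equality.

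There is essentially no hard step here; the proposition is a bookkeeping corollary of the two lemmas under the truthfulness assumption of Lemma~\ref{dominant}. The only point that warrants a sentence of care is making sure the advertiser whose bid sets the winner's payment in Lemma~\ref{payment} --- the second-highest in platform value --- is indeed the same $\second$ that appears in the social-welfare difference on the right-hand side; this holds by construction of the selection and payment rules. One might additionally remark that $SW(\SCi,\question,\bidi)$ does not actually depend on its bid argument, so the final rewriting in terms of social welfare needs only Definition~\ref{sw_definition} and not the dominant-strategy hypothesis.
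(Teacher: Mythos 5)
Your proof is correct and is exactly the intended argument: the paper states this proposition without proof (remarking only that it ``is easy to show''), and your chain of substitutions --- winner's utility $\valueis-\payis$, Lemma~\ref{payment} with $\bidit=\valueit$ from truthful bidding, then regrouping via Definition~\ref{sw_definition} --- is the bookkeeping the authors had in mind. Your closing observation that the social-welfare rewriting does not actually use the bid argument is a fair and accurate remark.
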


\section{From Ads to Sponsored Answers}
\label{sec:unigram}
Heretofore, we treated the fusion of the organic answer
$\pure$ and the ad $\adi$ to yield a sponsored result, $\SCi$, as a
black box. That is, advertiser $i$ bids on $\SCi$ based on the value
$\valuei$ it attributes to it. Then, the QA platform selects the
sponsored answer that yields the highest platform value (see
Definition \ref{def:platformValue}). We now turn to describe possible fusion methods and discuss their implications.

\subsection{Sponsored Answer Generation Using Large Language Models}
\label{sec:fuseLLM}
A possible approach to fusing the organic answer $\pure$ and an ad
$\adi$ is to request (via a prompt) a large language model (LLM; e.g.,
GPT) to perform the fusion \cite{zelch_user_2024}. However, it is impossible to
make theoretical statements about the resultant sponsored answer,
$\SCi$. 

Furthermore, it is highly difficult, to impossible, to quantitatively
control the relative emphasis on $\pure$ versus $\adi$ in a
prompting-based fusion approach as just described. This aspect can be
quite important for both the advertiser and the user of the QA
platform: the advertiser strives to have the sponsored answer $\SCi$ similar to $\adi$ and the user probably wants it to be similar to the
organic answer, $\pure$. We re-visit this point below.

We now consider an alternative approach to fusion inspired by a recent
proposal of integrating multiple LLMs for content generation \cite{duetting_mechanism_2023}. Suppose advertiser $i$ generates the (textual) ad $\adi$
using an LLM, denoted $\lm^{ad}$. Suppose also that the QA platform
uses an LLM, $\lm^{org}$, to generate the organic answer $\pure$ for
the question $\question$ which is used as the prompt. Assume that both
LLMs are autoregressive. That is, given a sequence of tokens $x$
(composed of the prompt and tokens already generated), there is a
probability distribution over the token vocabulary $\voc$ from which
the next token is sampled. Formally, $p(t\vert \lm,x)$ is the
probability of token $t$ given the LLM $\lm$ and the sequence $x$
already generated. Then, we can fuse the organic answer $\pure$ and
the ad $\adi$ by (i) defining a next-token generation distribution:
$\lambda p(t\vert \lm^{org},x) + (1-\lambda) p (t\vert \lm^{ad},x)$,
where $\lambda$ is a parameter controlling the relative emphasis on
the organic answer and the ad; and (ii) using the original prompts in
each of the two LLMs: the request to generate the ad ($\lm^{ad}$) and the question ($\lm^{org}$).

There are three main challenges embodied in the fusion process just
described. First, there is no guarantee about the quality of
the generated sponsored answer, $\SCi$, since we mix distributions at the token level. Second, we assumed that an ad was generated using an LLM which is not necessarily the case. Third, we still cannot theoretically reason about the generated sponsored answer and formally use it in the advertiser value function and user utility function defined above.

To facilitate a first step towards a formal end-to-end (theoretical)
analysis of the process of fusing $\pure$ and $\adi$ to produce
$\SCi$, and using $\SCi$ in the auction defined in Section
\ref{sec:auctions}, we subscribe to the language modeling framework to
retrieval \cite{Ponte+Croft:98a,Lafferty+Zhai:01a}. Specifically, we
use unigram language models based on a multinomial distribution
\cite{Ponte+Croft:98a,Song+Croft:99a,Lafferty+Zhai:01a}.

\subsection{Unigram Language Models}
Let $x$ be a text and $t$ a term (token) in the vocabulary
$\voc$. $p_x(t) := \frac {\tf{t}{x}}{\sum_{t' \in x} \tf{t'}{x}}$ is
the maximum likelihood estimate of $t$ with respect to $x$ assuming a
multinomial distribution; $\tf{t}{x}$ is the number of times $t$
appears in $x$. $p_x(\cdot)$ is the unsmoothed unigram
language model induced from $x$. The probability assigned to a term
sequence $t_1,\ldots,t_n$ is $p_x(t_1, \ldots, t_n) := \prod_{i=1}^{n} p_x(t_i)$. Often, unigram language models 
are smoothed to avoid the zero probability problem \cite{Zhai+Lafferty:01a}. For our formal analysis herein, smoothing is not required. We hasten to point out however that
our findings also hold for smoothed language models.

Given an organic answer $\pure$ and an ad $\adi$, we can fuse the
unigram language models induced from them using a linear mixture to
produce a unigram language model from which a sponsored answer will be
generated: $p_{i}^{spon}(t) := \lambda_i p_{\pure}(t) + (1-\lambda_i)
p_{\adi} (t)$. Note that $\lambda_i$ is advertiser-specific and can
potentially be negotiated between the advertiser and the QA platform
before the auction takes place. We can now generate a specific
sponsored answer $\SCi$ of length $k$ by sampling $k$ times from
$p_i^{spon}(\cdot)$. The {\em mean} number of occurrences of token $t$
in a document of length $n$ generated using $p_{i}^{spon}(\cdot)$, i.e.,
in a sponsored answer, is $np_i^{spon}(t)$.\footnote{This is the mean in
  a multinomial distribution.} Herein, we use $\SCi$ to denote the
``mean'' document which is composed of these mean number of
occurrences of tokens, henceforth simply referred to as the sponsored
answer. As was the case of fusing LLMs at the next-token generation
level, there is no guarantee about the quality of the generated
sponsored answer. We hasten to point out that our use of unigram
language models is intended to facilitate the formal analysis of
the auction mechanism given a specific fusion method.

\myparagraph{Advertiser's value, user's utility and text similarity}
We recall that advertiser $i$ is interested in having her ad $\adi$
presented for the question $\question$. Instead, the platform suggests
showing the sponsored answer $\SCi$ which is the result of fusing
$\adi$ with the organic answer $\pure$. Naturally, the advertiser
strives to have the sponsored answer as similar as possible to her
ad. Hence, we define the advertiser's value function used in
Definition \ref{def:platformValue} as
\begin{equation}
  \label{eq:advertiserVal}
  \valuei := \simFn{\SCi}{\adi},
  \end{equation}
where $\simFnSymbol$ is an inter-text similarity measure. We note that
the use of a similarity function here serves two roles. The first is
having common grounds for the advertiser's value function and user's
utility function: both are defined in terms of textual
similarity. (See the below.) Second, the specific similarity function
we use below allows for convenient mathematical treatment. We could have used a variety of alternatives for defining the value function; e.g., based on textual entailment or more evolved natural language inference approaches.

Similarly, we assume that the goal of the user who posted the question
$\question$ so as to (presumably) receive an organic answer is to have the sponsored answer $\SCi$ as similar as possible to the organic answer $\pure$. Accordingly, we define the user utility function used in Definition \ref{def:platformValue} as
\begin{equation}
  \label{eq:userUtility}
  \userUtil{\SCi}{\question} := \simFn{\SCi}{\pure}.
\end{equation}

We define the similarity of texts $x$ and $y$ based on the cross
entropy measure (cf. \cite{Lafferty+Zhai:01a}):
\begin{equation}
  \label{eq:sim}
  \simFn{x}{y} := 2A - CE (p_x(\cdot) || p_y(\cdot)) - CE (p_y(\cdot)|| p_x(\cdot));
\end{equation}
$CE (p_x(\cdot) || p_y(\cdot)) = \sum_{t \in \voc} p_x(t) log p_y(t)$;\footnote{We use cross entropy rather than KL divergence because the cross entropy is linear in its left argument. Note that the resultant similarity function is concave.} lower cross entropy corresponds to increased
similarity\footnote{As noted above, usually language models are smoothed so as
to avoid zero probabilities (e.g., in cross entropy computation). For our purposes (constructive proofs presented below), smoothing is not needed as there are no cases of zero probabilities.};
$A \in \mathbb{R_{+}}$ is used to ensure that the
similarity value is in $\mathbb{R_{+}}$: $A := \max_{j_1,j_2} CE(p_{d_{j_1}}(\cdot)||p_{d_{j_2}}(\cdot)) + CE(p_{d_{j_2}}(\cdot)||p_{d_{j_1}}(\cdot))$ where
$d_{j_1}, d_{j_2} \in \{{\pure}\} \cup \{d_i\}_{i\in N}$.

\myparagraph{Auction Analysis with Unigram Language Models} We now
turn to show that in the unigram-language-model setting described
above, the advertiser $i$ who wins the auction is not necessarily the one with
the maximal value $\valuei$. Since by Lemma \ref{dominant} the
dominant strategy is to set the bid as the value, we arrive to an
interesting result: the advertiser who placed the highest bid is not
necessarily the one who wins the auction. This result is in contrast to the state-of-affairs in many other auction mechanisms, and is due to the QA platform's goal to satisfy both users and advertisers.


\begin{proposition}\label{surplus_obs1}
The winner of the auction is not necessarily advertiser $i$ whose
value $\valuei$ is the maximum with respect to all advertisers'
values. Accordingly, the winner is not necessarily the advertiser who placed the highest bid.
\end{proposition}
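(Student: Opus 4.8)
The plan is to establish the claim by exhibiting an explicit counterexample inside the unigram-language-model fusion framework of this section. Fix $\numAds=2$ advertisers and a small vocabulary $\voc$, and choose unigram language models $p_{\pure}(\cdot)$, $p_{\ada}(\cdot)$, $p_{\adb}(\cdot)$ that all have full support on $\voc$ (so that no $\log 0$ terms arise in any cross entropy), together with mixing parameters $\lambda_1,\lambda_2\in[0,1]$. Since the mean document $\SCi$ has token counts proportional to $p_i^{spon}(\cdot)$, its induced unigram model is exactly $p_{\SCi}(\cdot)=\lambda_i\,p_{\pure}(\cdot)+(1-\lambda_i)\,p_{\adi}(\cdot)$, so the advertiser values $\valuea=\simFn{\SCa}{\ada}$, $\valueb=\simFn{\SCb}{\adb}$ and the user utilities $\userUtil{\SCa}{\question}=\simFn{\SCa}{\pure}$, $\userUtil{\SCb}{\question}=\simFn{\SCb}{\pure}$ are fully determined through the cross-entropy similarity of Equation~\ref{eq:sim}. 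By Lemma~\ref{dominant} each advertiser bids her value, so by Definition~\ref{def:platformValue} the platform value of advertiser $i$ is $\userUtil{\SCi}{\question}+\valuei$; expanding Equation~\ref{eq:sim}, this equals $4A$ minus the advertiser-dependent quantity $CE(p_{\SCi}\|p_{\adi})+CE(p_{\adi}\|p_{\SCi})+CE(p_{\SCi}\|p_{\pure})+CE(p_{\pure}\|p_{\SCi})$. In particular the constant $A$ is common to all advertisers and thus irrelevant to the identity of the winner selected by Equation~\ref{eq:winner}: the winner is the advertiser minimizing that sum of four cross entropies.

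To separate the winner from the highest-value (equivalently, by Lemma~\ref{dominant}, highest-bidding) advertiser, I would choose the parameters as follows. Let $\ada$ be a highly concentrated distribution far from $\pure$ and take $\lambda_1=0$ (or a tiny positive value), so that $\SCa$ coincides with $\ada$: then $\valuea$ is close to its largest conceivable value, while $\userUtil{\SCa}{\question}=\simFn{\SCa}{\pure}$ is driven down because $p_{\SCa}$ concentrates on a token that carries little mass under $p_{\pure}$. Let $\adb$ be close to $\pure$, so that $\SCb$ stays close to both $\adb$ and $\pure$ for any $\lambda_2$ (e.g.\ $\lambda_2=1/2$): then $\userUtil{\SCb}{\question}$ is near its maximum while $\valueb$, although not small, is strictly below $\valuea$ (near-uniform distributions incur larger cross-entropy terms than the concentrated $\ada$ does against itself). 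The counterexample is completed by checking two inequalities, $\valuea>\valueb$ (so the winner does not have the maximal value) and $\userUtil{\SCa}{\question}+\valuea<\userUtil{\SCb}{\question}+\valueb$ (so advertiser $2$ is the winner), which together say that advertiser $1$'s user-utility deficit outweighs her value surplus. Concretely I would try $\voc$ of size two, $p_{\pure}=(1/2,1/2)$, $p_{\ada}=(0.99,0.01)$, $\lambda_1=0$, $p_{\adb}=(0.52,0.48)$, $\lambda_2=1/2$, and verify both inequalities by direct substitution into Equation~\ref{eq:sim}.

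The only real work is this last numerical verification: the two base language models and the two $\lambda$'s must be tuned so that the four cross-entropy quantities simultaneously give $\valuea>\valueb$ and $\userUtil{\SCa}{\question}+\valuea<\userUtil{\SCb}{\question}+\valueb$. Everything else is automatic: nonnegativity of all similarity values follows from the definition of $A$ (one may enlarge $A$ so that it also dominates the cross entropies involving the sponsored answers), and the full-support assumption rules out zero probabilities. Arguing first in the limiting regime where $\ada$ is increasingly peaked and $\lambda_1\to 0$ makes both inequalities transparent, and since they are strict they survive a small perturbation of the parameters, yielding a fully explicit finite instance. Finally, the second assertion of the proposition is immediate from Lemma~\ref{dominant}: under the dominant strategy advertiser $1$'s bid $\valuea$ exceeds advertiser $2$'s bid $\valueb$, yet advertiser $2$ is the winner.
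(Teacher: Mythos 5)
Your proposal is correct and follows essentially the same route as the paper's proof: a two-advertiser, two-token counterexample in which one sponsored answer collapses onto the ad itself (maximal value, poor user utility) while the other stays close to the organic answer, so that the aggregated platform value reverses the value ordering, and the bid claim then follows from Lemma~\ref{dominant}. The paper instantiates this with $\epsilon$-parametrized distributions and the limit $\epsilon\to 0$ rather than your fixed numerical parameters, but the construction and the two inequalities to be verified are the same.
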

\begin{proof}
  Consider a 2-advertisers setting with a two-terms vocabulary: $\voc=\{a,b\}$.
  Suppose $\pure$, $d_1$ and $d_2$ are all sequences of $k$ terms.
  Suppose (1-$\epsilon$) of the tokens in $\pure$ are $a$ and $\epsilon$ are $b$ ($\epsilon \in [0,1]$). Consequently: $p_{\pure}(a) = 1 - \epsilon$ and $p_{\pure}(b) = \epsilon$. For $d_1$ we assume: $p_{d_1}(a) = \epsilon$ and $p_{d_1}(b) = 1- \epsilon$. For $d_2$ we assume: $p_{d_2}(a) = p_{d_2}(b) = 0.5$.

  Following the fusion approach described above, $p^{s}_{d_i}(t) = \lambda_i p_{d_0}(t) + (1-\lambda_i) p_{d_i}(t)$ for $i \in \{1,2\}$ and $t \in \{a,b\}$. We set $\lambda_1=\epsilon$ and $\lambda_2 = 1-\epsilon$. It can be shown that for $\epsilon \rightarrow 0$, $\simFn{d_1^{s}}{d_1} =  v_1(d_1^{s}) > \simFn{d_2^{s}}{d_2} = v_2(d_2^{s})$ and $v_2(d_2^{s}) + \simFn{d_2^{s}}{d_0} > v_1(d_1^{s}) + \simFn{d_1^{s}}{d_0}$; i.e., $v_2(d_2^{s}) + \userUtil{d_2^{s}}{\question}  > v_1(d_1^{s}) + \userUtil{d_1^{s}}{\question}$. Thus, $d_2$ wins the auction but the value for $d_1$ is higher. The full proof is provided in Appendix \ref{surplus_obs1_appendix}. Since by Lemma \ref{dominant} bidding the value is a dominant strategy, we get that the winner of the auction is not the one who placed the highest bid.
\end{proof}
We next show that the user utility is not necessarily the maximal
possible. This is, again, due to the fact that the QA platform aims to satisfy
both users and advertisers.
\begin{proposition}\label{surplus_obs2}
  If advertiser $i$ won the auction,
  the resultant user utility, $\userUtil{d_i^{s}}{\question}$,
  is not necessarily the maximal with respect to that attained by selecting other advertisers (sponsored answers).
\end{proposition}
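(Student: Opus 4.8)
The plan is to prove the statement by a counterexample within the unigram-language-model framework of Section~\ref{sec:unigram}, closely mirroring the construction used for Proposition~\ref{surplus_obs1}. The qualitative effect to produce is the ``dual'' of the one exploited there: an advertiser whose sponsored answer is almost identical to the organic answer $\pure$, and hence attains high user utility, but whose ad lies far from $\pure$, so that its advertiser value is small; this advertiser should lose the auction to an advertiser whose sponsored answer is farther from $\pure$ but much more faithful to its own ad, so that the latter's platform value (the sum of user utility and advertiser value) is strictly larger. Since by Lemma~\ref{dominant} the dominant bid equals the value, this is a legitimate equilibrium outcome, and in it the winner's user utility is strictly below that of the losing advertiser. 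As with Proposition~\ref{surplus_obs1}, the effect is attributable to the platform maximizing the sum of user and advertiser satisfaction rather than user utility alone.

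Concretely, I would keep the two-advertiser, two-term-vocabulary instance of the proof of Proposition~\ref{surplus_obs1}: $\voc=\{a,b\}$, $p_{\pure}(a)=1-\epsilon$, $p_{d_1}(a)=\epsilon$, $p_{d_2}(a)=p_{d_2}(b)=0.5$, but swap the mixture parameters, taking $\lambda_1 = 1-\epsilon$ and $\lambda_2 = \epsilon$. Then the sponsored answer $\SCa$ is a mixture dominated by $\pure$, so $\userUtil{\SCa}{\question}=\simFn{\SCa}{\pure}$ is large while $\valuea=\simFn{\SCa}{d_1}$ is small, whereas $\SCb$ is dominated by $d_2$, so $\valueb=\simFn{\SCb}{d_2}$ is large while $\userUtil{\SCb}{\question}=\simFn{\SCb}{\pure}$ is only moderate. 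The remaining steps are: (i) write the two sponsored answers (their induced unigram models) out explicitly; (ii) evaluate, via the cross-entropy terms of Equation~\ref{eq:sim}, the similarities entering $\VP(\SCa,\question,\valuea)$ and $\VP(\SCb,\question,\valueb)$ together with the normalizer $A$; (iii) conclude $\VP(\SCb,\question,\valueb) > \VP(\SCa,\question,\valuea)$ for $\epsilon\to 0$, so advertiser $2$ wins and $\SCb$ is shown; and (iv) conclude $\userUtil{\SCa}{\question} > \userUtil{\SCb}{\question}$ for $\epsilon\to 0$, so selecting the losing advertiser $1$ would have yielded strictly higher user utility. A concluding remark invokes Lemma~\ref{dominant} to note that the bids producing this outcome equal the advertisers' values, so the scenario is the equilibrium one.

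The main obstacle is the comparison in steps (ii)--(iv). As $\epsilon\to 0$, every cross-entropy term and the constant $A$ itself diverge like $-\log\epsilon$, so the inequalities reduce to tracking the coefficient of $-\log\epsilon$ in each of $\userUtil{\SCa}{\question}$, $\valuea$, $\userUtil{\SCb}{\question}$, $\valueb$ and $A$. The delicate point is $A$: it is the maximum over all (unordered) pairs from $\{\pure\}\cup\{d_1,d_2\}$ of the symmetrized cross entropy, and it is governed by the most dissimilar pair --- here $\pure$ versus $d_1$; its leading term fixes the scale of every similarity and hence of both platform values, so getting it right is what makes the two inequalities come out. Once the leading coefficients are computed, $\VP(\SCb,\question,\valueb) > \VP(\SCa,\question,\valuea)$ and $\userUtil{\SCa}{\question} > \userUtil{\SCb}{\question}$ follow for all sufficiently small $\epsilon > 0$, the bounded and $o(1)$ remainders being irrelevant to the strict inequalities. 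These computations are routine and, as for Proposition~\ref{surplus_obs1}, can be relegated to an appendix.
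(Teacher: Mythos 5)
Your proposal is correct and follows essentially the same route as the paper's own proof: the paper reuses the exact two-advertiser, two-term instance from Proposition~\ref{surplus_obs1} with $\lambda_1 = 1-\epsilon$ and $\lambda_2 = 0.5$ (you take $\lambda_2=\epsilon$, which works equally well since $\SCb$ still stays bounded away from $\pure$ while $\SCa \to \pure$), and then verifies by the same leading-order cross-entropy computation that advertiser $2$ wins the auction while $\userUtil{\SCa}{\question} > \userUtil{\SCb}{\question}$. The only small inaccuracy is your worry about the divergence of $A$: since both platform values contain $4A$ and both user utilities contain $2A$, the normalizer cancels identically in each comparison and plays no role in the inequalities.
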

\begin{proof}
  We consider the same 2-advertisers setting as in Proposition \ref{surplus_obs1} with the same ads and organic answer. The difference is that now we set $\lambda_1= 1 - \epsilon$ and $\lambda_2=0.5$.

  %
  It can be shown that for $\epsilon \rightarrow 0$, $\userUtil{d_1^{s}}{\question} = \simFn{d_1^{s}}{\pure} >  \simFn{d_2^{s}}{\pure} = \userUtil{d_2^{s}}{\question}$ and  $\valueb + \userUtil{d_2^{s}}{\question}  > \valuea + \userUtil{d_1^{s}}{\question}$. That is, $i=2$ is the winner of the auction but the resultant user utility is lower than that for advertiser 1. The full proof is in Appendix \ref{surplus_obs2_appendix}.
\end{proof}

\section{Conclusions and Future Work}
\label{sec:conc}
We presented a novel formal study of a platform for sponsored question answering
(QA). The platform is based on fusing an organic answer to a question
with an ad so as to produce a sponsored answer. Advertisers bid on
their corresponding sponsored answers. Inspired by principles of Generalized Second
Price Auctions (GSPs) \cite{,edelman_internet_2007,varian_position_2007}, the platform selects the sponsored answer to
show the user and sets the payment for the selected advertiser.

We prove that a dominant strategy of advertisers is to bid on their
true value of the sponsored answer which is a property of VCG \cite{MasColell+al:95a} auctions. The result is that the QA setting reaches a stable state
(equilibrium) where advertisers have no incentive to continuously
change their bids (a.k.a., shading). We also formalize the notion of
social welfare and show that the utility of the advertiser who wins the auction is the difference between the social welfare attained when presenting her sponsored answer and the social welfare of presenting the sponsored answer of the second best advertiser whose bid is the payment the winner has to pay.

Our general analysis of the auction is not committed to a specific
approach to fusing an organic answer and an ad. To theoretically
analyze end-to-end the process of fusing an organic answer with an ad
and apply the auction, we use unigram language models. We prove that
the auction winner is not necessarily that with the highest bid. We
also prove that the attained user utility is not the maximal possible
with respect to selecting other advertisers. This result is due to the fact that we design the QA platform to satisfy both users and advertisers.

In the unigram-language-model analysis, we used an inter-text similarity measure as a basis for defining advertisers' value functions and users' utility functions. For future work we plan to study the effect of using alternative value and utility functions.

Obviously, there are far reaching ethical considerations involved in a
setting where organic content is mixed with ads; cf., the
state-of-affairs in native advertising where users often fail to
distinguish ads from organic content even when these are not fused
\cite{Amazeen+Wojdynski:20a}. We refer the reader to discussion of ethical issues in recent work on fusing organic answers and ads \cite{zelch_user_2024}. We take the stand that the resultant sponsored answers should be explicitly marked as such giving users the alternative to consume only organic answers.


\smallskip
\myparagraph{Acknowledgments} We thank the reviewers for their comments.
The work by Moshe Tennenholtz was supported by funding from the
European Research Council (ERC) under the European Union's Horizon
2020 research and innovation programme (grant agreement 740435).

\appendix

\section{Proofs}
In what follows $CE(d_a||d_b)$
stands for $CE(p_{d_a}(\cdot)||p_{d_b}(\cdot))$. In both proofs below, we omit in the computations factors which are $O(\epsilon)$ or $O(\epsilon \log \epsilon)$ since we assume $\epsilon \rightarrow 0$.

\subsection{Proof of Proposition \ref{surplus_obs1}} \label{surplus_obs1_appendix}
 Recall that
$\un{\pure}{a} = 1-\epsilon$, $\un{\pure}{b}
= \epsilon$, $\un{\ada}{a} = \epsilon$, $\un{\ada}{b} = 1-\epsilon$,
$\un{\adb}{a} = 0.5$, $\un{\adb}{b} = 0.5$. Since $\lambda_1
= \epsilon, \lambda_2 = 1-\epsilon$:

$\un{\SCa}{a} = \lambda_1 \un{\pure}{a} + (1-\lambda_1) \un{\ada}{a} = \epsilon
(1-\epsilon) + (1-\epsilon) \epsilon = 2\epsilon(1-\epsilon)$.

$\un{\SCa}{b} = \lambda_1 \un{\pure}{b} + (1-\lambda_1) \un{\ada}{b}
= \epsilon \cdot \epsilon + (1-\epsilon) (1-\epsilon) = \epsilon^2 +
(1-\epsilon)^2$.

$\un{\SCb}{a} = \lambda_2 \un{\pure}{a} +
(1-\lambda_2) \un{\adb}{a} = (1-\epsilon) (1-\epsilon)
+ \epsilon \cdot 0.5 = (1-\epsilon)^2 + \frac{\epsilon}{2}$.

$\un{\SCb}{b} = \lambda_2 \un{\pure}{b} + (1-\lambda_2) \un{\adb}{b} =
(1-\epsilon) \cdot \epsilon+ \epsilon \cdot 0.5 = \epsilon \cdot
(1-\epsilon) + \frac{\epsilon}{2}$.

We now compute $\valuea, \valueb, \userUtil{\SCa}{\question}, \userUtil{\SCb}{\question}$ using the fact that $CE$ is linear in its left argument:

$\valuea = \simFn{\SCa}{\ada} = 2A - CE(\SCa||\ada) -CE(\ada||\SCa) = 2A -\epsilon CE(\pure||\ada) - (1-\epsilon) CE(\ada||\ada) + \epsilon log(2\epsilon(1-\epsilon)) + (1-\epsilon)log((1-\epsilon)^2+\epsilon^2) = 2A - CE(\ada||\ada)$.

$\valueb = \simFn{\SCb}{\adb} = 2A -CE(\SCb||\adb) -CE(\adb||\SCb) = 2A -(1-\epsilon) CE(\pure||\adb) -\epsilon CE(\adb||\adb) + 0.5 log((1-\epsilon)^2+\frac{\epsilon}{2}) + 0.5 log(\epsilon(1-\epsilon)+\frac{\epsilon}{2}) = 2A - CE(\pure||\adb) + 0.5 log(\frac{3\epsilon}{2} -\epsilon^2)$.

$\userUtil{\SCa}{\question} = \simFn{\SCa}{\pure} = 2A - CE(\SCa||\pure) -CE(\pure||\SCa) = 2A - \epsilon CE(\pure||\pure) - (1-\epsilon) CE(\ada||\pure) + (1-\epsilon) log(2\epsilon(1-\epsilon)) + \epsilon log((1-\epsilon)^2 + \epsilon^2) = 2A -CE(\ada||\pure) + log(2\epsilon(1-\epsilon))$.

$\userUtil{\SCb}{\question} = \simFn{\SCb}{\pure} = 2A - CE(\SCb||\pure) - CE(\pure||\SCb) = 2A - (1-\epsilon) CE(\pure||\pure) - \epsilon CE(\adb||\pure) + (1-\epsilon) log((1-\epsilon)^2 + \frac{\epsilon}{2}) + \epsilon log(\epsilon(1-\epsilon)+\frac{\epsilon}{2}) = 2A - CE(\pure, \pure)$.

We now compute $CE(\ada||\ada), CE(\ada||\pure), CE(\pure||\adb), CE(\pure, \pure)$:

$CE(\ada||\ada) =-\epsilon log(\epsilon) -(1-\epsilon)log(1-\epsilon)=0$.

$CE(\ada||\pure) =-\epsilon log(1-\epsilon) -(1-\epsilon)log(\epsilon) = -(1-\epsilon)log(\epsilon) = -log(\epsilon)$.

$CE(\pure||\adb) = -(1-\epsilon)log(0.5) -\epsilon log(0.5) = -log(0.5)$.

$CE(\pure||\pure) = -(1-\epsilon)log(1-\epsilon) -\epsilon log(\epsilon) = 0$.

We show that the value of the first advertiser is larger than that of the second advertiser:

$\valuea > \valueb \Leftrightarrow 2A - CE(\ada||\ada) >  2A - CE(\pure||\adb) + 0.5 log(\frac{3\epsilon}{2} -\epsilon^2) \Leftrightarrow 2A > 2A +log(0.5) + 0.5 log(\frac{3\epsilon}{2} - \epsilon^2) \Leftrightarrow 0 > log(0.5) + 0.5log(\frac{\epsilon}{2}) + 0.5log(3-2\epsilon) \Leftrightarrow -log(0.5) - 0.5log(3) > 0.5log(\frac{\epsilon}{2}) \Leftrightarrow \frac{8}{3} > \epsilon$.

We now turn to determine the winner. The prospect platform value for each advertiser is:

$\VP(\SCa, \question, \valuea)=\valuea + \userUtil{\SCa}{\question} = 4A + log(\epsilon) + log(2\epsilon(1-\epsilon))= 4A + log(\epsilon) + log(2) + log(\epsilon) + log(1-\epsilon)= 4A + 2log(\epsilon) + log(2)$.

$\VP(\SCb, \question, \valueb)=\valueb + \userUtil{\SCb}{\question} = 4A + log(0.5) + 0.5 log(\frac{3\epsilon}{2} - \epsilon^2) = 4A + log(0.5) + 0.5 log(\epsilon) + 0.5 log(1.5 - \epsilon)=4A + log(0.5) + 0.5 log(\epsilon) + 0.5  log(1.5)$.

Thus, the winner is the second advertiser since:

$\valuea + \userUtil{\SCa}{\question} < \valueb + \userUtil{\SCb}{\question} \Leftrightarrow 4A + 2log(\epsilon) + log(2) < 4A + log(0.5) + 0.5 log(\epsilon) + 0.5  log(1.5) \Leftrightarrow 1.5 log(\epsilon) < -log(2) + log(0.5) + 0.5 log(1.5) \Leftrightarrow \epsilon < \frac{\sqrt[3]{6}}{4}$.

\subsection{Proof of Proposition \ref{surplus_obs2}} \label{surplus_obs2_appendix}
Recall that we use the same documents as in
Proposition \ref{surplus_obs1}: $\un{\pure}{a} = 1-\epsilon$,
$\un{\pure}{b}=\epsilon$, $\un{\ada}{a} = \epsilon$, $\un{\ada}{b}=
1-\epsilon$, $\un{\adb}{a} = 0.5$, $\un{\adb}{b}=0.5$.  Since
$\lambda_1 = 1-\epsilon, \lambda_2 = 0.5$, the unigram language models from which the sponsored answers are sampled are:
$\un{\SCa}{a} = \lambda_1 \un{\pure}{a} + (1-\lambda_1) \un{\ada}{a} = (1-\epsilon) (1-\epsilon) + \epsilon \cdot \epsilon = (1-\epsilon)^2+\epsilon^2$.

$\un{\SCa}{b} = \lambda_1 \un{\pure}{b} + (1-\lambda_1) \un{\ada}{b} = (1-\epsilon) \epsilon + \epsilon (1-\epsilon) = 2\epsilon(1-\epsilon)$.

$\un{\SCb}{a} = \lambda_2 \un{\pure}{a} + (1-\lambda_2) \un{\adb}{a} = 0.5 \cdot (1-\epsilon) + 0.5 \cdot 0.5 = 0.75 - \frac{\epsilon}{2}$.

$\un{\SCb}{b} = \lambda_2 \un{\pure}{b} + (1-\lambda_2) \un{\adb}{b} = 0.5 \cdot \epsilon+ 0.5 \cdot 0.5 = \frac{\epsilon}{2} + 0.25$.

We compute $\valuea, \valueb, \userUtil{\SCa}{\question}, \userUtil{\SCb}{\question}$ using again the linearity of $CE$ in its left argument:

$\valuea = \simFn{\SCa}{\ada} = 2A -CE(\SCa||\ada) -CE(\ada||\SCa) = 2A -(1-\epsilon) CE(\pure||\ada) - \epsilon CE(\ada||\ada) + \epsilon log((1-\epsilon)^2 + \epsilon^2) + (1-\epsilon)log(2\epsilon(1-\epsilon)) = 2A -CE(\pure||\ada) + log(2\epsilon)$.

$\valueb = \simFn{\SCb}{\adb} = 2A -CE(\SCb||\adb) -CE(\adb||\SCb) = 2A - 0.5 CE(\pure||\adb) - 0.5 CE(\adb||\adb) + 0.5 log(\frac{3}{4}-\frac{\epsilon}{2}) + 0.5 log(\frac{1}{4}+\frac{\epsilon}{2}) = 2A - 0.5 CE(\pure||\adb) - 0.5 CE(\adb||\adb) + 0.5 log(\frac{3}{4}) + 0.5 log(\frac{1}{4})$.

$\userUtil{\SCa}{\question} = \simFn{\SCa}{\pure} = 2A - CE(\SCa||\pure) - CE(\pure||\SCa) = 2A - (1-\epsilon) CE(\pure||\pure) - \epsilon CE(\ada||\pure) + (1-\epsilon) log((1-\epsilon)^2 + \epsilon^2) + \epsilon log(2\epsilon(1-\epsilon)) = 2A - CE(\pure||\pure)$.

$\userUtil{\SCb}{\question} = \simFn{\SCb}{\pure} = 2A - CE(\SCb||\pure) -CE(\pure||\SCb) = 2A - 0.5 CE(\pure||\pure) - 0.5 CE(\adb||\pure) + (1-\epsilon)log(\frac{3}{4}-\frac{\epsilon}{2}) + \epsilon log(\frac{1}{4}+\frac{\epsilon}{2}) =  2A - 0.5 CE(\pure||\pure) - 0.5 CE(\adb||\pure) + log(\frac{3}{4})$.

We compute all the following $CE(\cdot||\cdot)$ terms:

$CE(\ada||\ada), CE(\ada||\pure), CE(\pure||\ada) CE(\pure||\adb),\\ CE(\adb||\pure), CE(\pure, \pure), CE(\adb, \adb)$:

$CE(\ada||\ada) = -\epsilon log(\epsilon) -(1-\epsilon)log(1-\epsilon)=0$.

$CE(\ada||\pure) = -\epsilon log(1-\epsilon) -(1-\epsilon)log(\epsilon) = -(1-\epsilon)log(\epsilon) = -log(\epsilon)$.

$CE(\pure||\ada) = -(1-\epsilon)log(\epsilon) -\epsilon log(1-\epsilon) = -(1-\epsilon)log(\epsilon) = -log(\epsilon)$.

$CE(\pure||\adb) = -(1-\epsilon)log(0.5) -\epsilon log(0.5) = -log(0.5)$.

$CE(\adb||\pure) = -0.5log(1-\epsilon) -0.5log(\epsilon) = -0.5log(\epsilon)$.

$CE(\pure||\pure) = -(1-\epsilon)log(1-\epsilon) -\epsilon log(\epsilon) = 0$.

$CE(\adb||\adb) = -0.5log(0.5)-0.5log(0.5)=-log(0.5)$.

We show that the user utility given the sponsored answer of the first
advertiser is larger than the user utility given the sponsored answer
of the second advertiser:

$\userUtil{\SCa}{\question}
> \userUtil{\SCb}{\question} \Leftrightarrow 2A - (1-\epsilon)
CE(\pure||\pure) - \epsilon CE(\ada||\pure) + (1-\epsilon)
log((1-\epsilon)^2 + \epsilon^2) + \epsilon log(2\epsilon(1-\epsilon))
> 2A - 0.5 CE(d_0||d_0) - 0.5 CE(d_2||d_0) +
(1-\epsilon)log(\frac{3}{4}-\frac{\epsilon}{2}) + \epsilon
log(\frac{1}{4}+\frac{\epsilon}{2}) \Leftrightarrow 2A -
CE(\pure||\pure) > 2A - 0.5 CE(d_0||d_0) - 0.5 CE(d_2||d_0) +
log(\frac{3}{4}) \Leftrightarrow 0 > 0.25log(\epsilon) +
log(\frac{3}{4}) \Leftrightarrow \epsilon < \frac{4^4}{3^4}$.

We compute the prospect platform value for both advertisers in order to determine who is the winner:

$\VP(\SCa, \question, \valuea) = \valuea + \userUtil{\SCa}{\question} = 2A -(1-\epsilon) CE(\pure||\ada) - \epsilon CE(\ada||\ada) + \epsilon log((1-\epsilon)^2 + \epsilon^2) + (1-\epsilon)log(2\epsilon(1-\epsilon)) + 2A - (1-\epsilon) CE(\pure||\pure) - \epsilon CE(\ada||\pure) + (1-\epsilon) log((1-\epsilon)^2 + \epsilon^2) + \epsilon log(2\epsilon(1-\epsilon))  = 2A - CE(\pure||\ada) + log(2\epsilon) + 2A - CE(\pure||\pure) = 4A + log(\epsilon) + log(2\epsilon)$.

$\VP(\SCb, \question, \valueb) = \valueb + \userUtil{\SCb}{\question} =  2A - 0.5 CE(\pure||\adb) - 0.5 CE(\adb||\adb) + 0.5 log(\frac{3}{4}-\frac{\epsilon}{2}) + 0.5 log(\frac{1}{4}+\frac{\epsilon}{2}) + 2A - 0.5 CE(\pure||\pure) - 0.5 CE(\adb||\pure) + (1-\epsilon)log(\frac{3}{4}-\frac{\epsilon}{2}) + \epsilon log(\frac{1}{4}+\frac{\epsilon}{2}) = 4A +0.5log(0.5) + 0.5log(0.5) + 0.5log(\frac{3}{4}) + 0.5log(\frac{1}{4}) + 0.25log(\epsilon) + log(\frac{3}{4}) = 4A + log(\frac{3\sqrt{3}}{32}) + 0.25log(\epsilon)$.

The winner is again advertiser $i=2$, since she maximizes the prospect platform value: $\valuea + \userUtil{\SCa}{\question} < \valueb + \userUtil{\SCb}{\question} \Leftrightarrow 4A +log(\epsilon) + log(2\epsilon) < 4A + log(\frac{3\sqrt{3}}{32}) + 0.25log(\epsilon) \Leftrightarrow 2log(\epsilon) + log(2) < log(\frac{3\sqrt{3}}{32}) + 0.25log(\epsilon)  \Leftrightarrow 1.75log(\epsilon) < log(\frac{3\sqrt{3}}{32}) - log(2) \Leftrightarrow \epsilon < 0.23.$

\bibliographystyle{ACM-Reference-Format}
\bibliography{bib_sponsored}


\begin{thebibliography}{24}


\ifx \showCODEN    \undefined \def \showCODEN     #1{\unskip}     \fi
\ifx \showDOI      \undefined \def \showDOI       #1{#1}\fi
\ifx \showISBNx    \undefined \def \showISBNx     #1{\unskip}     \fi
\ifx \showISBNxiii \undefined \def \showISBNxiii  #1{\unskip}     \fi
\ifx \showISSN     \undefined \def \showISSN      #1{\unskip}     \fi
\ifx \showLCCN     \undefined \def \showLCCN      #1{\unskip}     \fi
\ifx \shownote     \undefined \def \shownote      #1{#1}          \fi
\ifx \showarticletitle \undefined \def \showarticletitle #1{#1}   \fi
\ifx \showURL      \undefined \def \showURL       {\relax}        \fi
\providecommand\bibfield[2]{#2}
\providecommand\bibinfo[2]{#2}
\providecommand\natexlab[1]{#1}
\providecommand\showeprint[2][]{arXiv:#2}

\bibitem[noa({[n.\,d.]})]%
        {noauthor_new_nodate}
 \bibinfo{year}{[n.\,d.]}\natexlab{}.
\newblock \bibinfo{title}{New solution to monetize {AI}-powered chats}.
\newblock
\newblock
\urldef\tempurl%
\url{https://about.ads.microsoft.com/en-us/blog/post/may-2023/a-new-solution-to-monetize-ai-powered-chat-experiences}
\showURL{%
\tempurl}


\bibitem[you({[n.\,d.]})]%
        {you}
 \bibinfo{year}{[n.\,d.]}\natexlab{}.
\newblock \bibinfo{title}{www.you.com}.
\newblock
\newblock


\bibitem[Amazeen and Wojdynski(2020)]%
        {Amazeen+Wojdynski:20a}
\bibfield{author}{\bibinfo{person}{Michelle~A. Amazeen} {and} \bibinfo{person}{Bartosz~W. Wojdynski}.} \bibinfo{year}{2020}\natexlab{}.
\newblock \showarticletitle{The effects of disclosure format on native advertising recognition and audience perceptions of legacy and online news publishers}.
\newblock \bibinfo{journal}{\emph{Journalism 21, 12 (2020),}}  \bibinfo{volume}{21} (\bibinfo{year}{2020}).
\newblock


\bibitem[Arnau et~al\mbox{.}(2023)]%
        {arnau_digital_2023}
\bibfield{author}{\bibinfo{person}{Elena Arnau}, \bibinfo{person}{Antonia Ferrer}, {and} \bibinfo{person}{Enrique Sanchez~Perez}.} \bibinfo{year}{2023}\natexlab{}.
\newblock \showarticletitle{Digital technology, {GAFA} companies and the changing business world: growth trends in the ethereal economy 20 years later}.
\newblock \bibinfo{journal}{\emph{Technium Social Sciences Journal}}  \bibinfo{volume}{39} (\bibinfo{date}{Jan.} \bibinfo{year}{2023}), \bibinfo{pages}{350--369}.
\newblock
\urldef\tempurl%
\url{https://doi.org/10.47577/tssj.v39i1.7703}
\showDOI{\tempurl}


\bibitem[Coldewey(2023)]%
        {coldewey}
\bibfield{author}{\bibinfo{person}{Devin Coldewey}.} \bibinfo{year}{2023}\natexlab{}.
\newblock \bibinfo{title}{That was fast! {Microsoft} slips ads into {AI}-powered {Bing} {Chat}}.
\newblock
\newblock
\urldef\tempurl%
\url{https://techcrunch.com/2023/03/29/that-was-fast-microsoft-slips-ads-into-ai-powered-bing-chat/}
\showURL{%
\tempurl}


\bibitem[Dubey et~al\mbox{.}(2024)]%
        {Dubei+al:24a}
\bibfield{author}{\bibinfo{person}{Kumar~Avinava Dubey}, \bibinfo{person}{Zhe Feng}, \bibinfo{person}{Rahul Kidambi}, \bibinfo{person}{Aranyak Mehta}, {and} \bibinfo{person}{Di Wang}.} \bibinfo{year}{2024}\natexlab{}.
\newblock \bibinfo{title}{Auctions with LLM Summaries}.
\newblock
\newblock
\showeprint[arxiv]{2404.08126}~[cs.GT]


\bibitem[Duetting et~al\mbox{.}(2023)]%
        {duetting_mechanism_2023}
\bibfield{author}{\bibinfo{person}{Paul Duetting}, \bibinfo{person}{Vahab Mirrokni}, \bibinfo{person}{Renato~Paes Leme}, \bibinfo{person}{Haifeng Xu}, {and} \bibinfo{person}{Song Zuo}.} \bibinfo{year}{2023}\natexlab{}.
\newblock \bibinfo{title}{Mechanism {Design} for {Large} {Language} {Models}}.
\newblock
\newblock


\bibitem[Edelman et~al\mbox{.}(2007)]%
        {edelman_internet_2007}
\bibfield{author}{\bibinfo{person}{Benjamin Edelman}, \bibinfo{person}{Michael Ostrovsky}, {and} \bibinfo{person}{Michael Schwarz}.} \bibinfo{year}{2007}\natexlab{}.
\newblock \showarticletitle{Internet {Advertising} and the {Generalized} {Second}-{Price} {Auction}: {Selling} {Billions} of {Dollars} {Worth} of {Keywords}}.
\newblock \bibinfo{journal}{\emph{American Economic Review}} \bibinfo{volume}{97}, \bibinfo{number}{1} (\bibinfo{year}{2007}), \bibinfo{pages}{242--259}.
\newblock


\bibitem[Feizi et~al\mbox{.}(2024)]%
        {feizi_online_2024}
\bibfield{author}{\bibinfo{person}{Soheil Feizi}, \bibinfo{person}{MohammadTaghi Hajiaghayi}, \bibinfo{person}{Keivan Rezaei}, {and} \bibinfo{person}{Suho Shin}.} \bibinfo{year}{2024}\natexlab{}.
\newblock \bibinfo{title}{Online {Advertisements} with {LLMs}: {Opportunities} and {Challenges}}.
\newblock
\newblock
\urldef\tempurl%
\url{http://arxiv.org/abs/2311.07601}
\showURL{%
\tempurl}


\bibitem[Jansen and Mullen(2008)]%
        {Jansern+Nullen:08a}
\bibfield{author}{\bibinfo{person}{Bernard~J. Jansen} {and} \bibinfo{person}{Tracy Mullen}.} \bibinfo{year}{2008}\natexlab{}.
\newblock \showarticletitle{Sponsored search: an overview of the concept, history, and technology}.
\newblock \bibinfo{journal}{\emph{Int. J. Electron. Bus.}} \bibinfo{volume}{6}, \bibinfo{number}{2} (\bibinfo{year}{2008}), \bibinfo{pages}{114--131}.
\newblock


\bibitem[Lafferty and Zhai(2001)]%
        {Lafferty+Zhai:01a}
\bibfield{author}{\bibinfo{person}{John~D. Lafferty} {and} \bibinfo{person}{Chengxiang Zhai}.} \bibinfo{year}{2001}\natexlab{}.
\newblock \showarticletitle{Document language models, query models, and risk minimization for information retrieval}. In \bibinfo{booktitle}{\emph{Proceedings of SIGIR}}. \bibinfo{pages}{111--119}.
\newblock


\bibitem[Mas-Colell et~al\mbox{.}(1995)]%
        {MasColell+al:95a}
\bibfield{author}{\bibinfo{person}{Andreu Mas-Colell}, \bibinfo{person}{Michael~D. Whinston}, {and} \bibinfo{person}{Jerry~R. Green}.} \bibinfo{year}{1995}\natexlab{}.
\newblock \bibinfo{booktitle}{\emph{{Microeconomic Theory}}}.
\newblock Number 9780195102680 in \bibinfo{series}{OUP Catalogue}. \bibinfo{publisher}{Oxford University Press}.
\newblock
\showISBNx{ARRAY(0x60741488)}


\bibitem[Metzler et~al\mbox{.}(2021)]%
        {metzler_rethinking_2021}
\bibfield{author}{\bibinfo{person}{Donald Metzler}, \bibinfo{person}{Yi Tay}, \bibinfo{person}{Dara Bahri}, {and} \bibinfo{person}{Marc Najork}.} \bibinfo{year}{2021}\natexlab{}.
\newblock \showarticletitle{Rethinking {Search}: {Making} {Domain} {Experts} out of {Dilettantes}}.
\newblock \bibinfo{journal}{\emph{ACM SIGIR Forum}} \bibinfo{volume}{55}, \bibinfo{number}{1} (\bibinfo{date}{June} \bibinfo{year}{2021}), \bibinfo{pages}{1--27}.
\newblock
\showISSN{0163-5840}


\bibitem[Myerson(1981)]%
        {Myerson:81a}
\bibfield{author}{\bibinfo{person}{Roger~B. Myerson}.} \bibinfo{year}{1981}\natexlab{}.
\newblock \showarticletitle{Optimal Auction Design}.
\newblock \bibinfo{journal}{\emph{Math. Oper. Res.}} \bibinfo{volume}{6}, \bibinfo{number}{1} (\bibinfo{year}{1981}), \bibinfo{pages}{58--73}.
\newblock


\bibitem[Nautz and Wolfstetter(1997)]%
        {Nautz+al:97a}
\bibfield{author}{\bibinfo{person}{D. Nautz} {and} \bibinfo{person}{E. Wolfstetter}.} \bibinfo{year}{1997}\natexlab{}.
\newblock \showarticletitle{{Bid shading and risk aversion in multi-unit auctions with many bidders}}.
\newblock \bibinfo{journal}{\emph{Economics Letters}} \bibinfo{volume}{56}, \bibinfo{number}{2} (\bibinfo{year}{1997}), \bibinfo{pages}{195--200}.
\newblock


\bibitem[Perloff(2023)]%
        {perloff_microsoft_2023}
\bibfield{author}{\bibinfo{person}{Catherine Perloff}.} \bibinfo{year}{2023}\natexlab{}.
\newblock \bibinfo{title}{Microsoft {Details} {How} {Advertising} {Works} on {Bing}'s {AI}-{Driven} {Chat}-{Based} {Search}}.
\newblock
\newblock
\urldef\tempurl%
\url{https://www.adweek.com/media/microsoft-details-how-advertising-works-on-bings-ai-driven-chat-based-search/}
\showURL{%
\tempurl}


\bibitem[Ponte and Croft(1998)]%
        {Ponte+Croft:98a}
\bibfield{author}{\bibinfo{person}{Jay~M. Ponte} {and} \bibinfo{person}{W.~Bruce Croft}.} \bibinfo{year}{1998}\natexlab{}.
\newblock \showarticletitle{A language modeling approach to information retrieval}. In \bibinfo{booktitle}{\emph{Proceedings of SIGIR}}. \bibinfo{pages}{275--281}.
\newblock


\bibitem[Schmidt et~al\mbox{.}(2024)]%
        {schmidt_detecting_2024}
\bibfield{author}{\bibinfo{person}{Sebastian Schmidt}, \bibinfo{person}{Ines Zelch}, \bibinfo{person}{Janek Bevendorff}, \bibinfo{person}{Benno Stein}, \bibinfo{person}{Matthias Hagen}, {and} \bibinfo{person}{Martin Potthast}.} \bibinfo{year}{2024}\natexlab{}.
\newblock \showarticletitle{Detecting {Generated} {Native} {Ads} in {Conversational} {Search}}. In \bibinfo{booktitle}{\emph{Companion {Proceedings} of the {Web} {Conference}}}. \bibinfo{publisher}{ACM}, \bibinfo{pages}{722--725}.
\newblock


\bibitem[Song and Croft(1999)]%
        {Song+Croft:99a}
\bibfield{author}{\bibinfo{person}{Fei Song} {and} \bibinfo{person}{W.~Bruce Croft}.} \bibinfo{year}{1999}\natexlab{}.
\newblock \showarticletitle{A general language model for information retrieval (poster abstract)}. In \bibinfo{booktitle}{\emph{Proceedings of SIGIR}}. \bibinfo{pages}{279--280}.
\newblock


\bibitem[Varian(2007)]%
        {varian_position_2007}
\bibfield{author}{\bibinfo{person}{Hal~R. Varian}.} \bibinfo{year}{2007}\natexlab{}.
\newblock \showarticletitle{Position auctions}.
\newblock \bibinfo{journal}{\emph{International Journal of Industrial Organization}} \bibinfo{volume}{25}, \bibinfo{number}{6} (\bibinfo{date}{Dec.} \bibinfo{year}{2007}), \bibinfo{pages}{1163--1178}.
\newblock
\showISSN{01677187}
\urldef\tempurl%
\url{https://doi.org/10.1016/j.ijindorg.2006.10.002}
\showDOI{\tempurl}


\bibitem[Vickrey(1961)]%
        {vickrey_counterspeculation_1961}
\bibfield{author}{\bibinfo{person}{William Vickrey}.} \bibinfo{year}{1961}\natexlab{}.
\newblock \showarticletitle{Counterspeculation, {Auctions}, and {Competitive} {Sealed} {Tenders}}.
\newblock \bibinfo{journal}{\emph{The Journal of Finance}} \bibinfo{volume}{16}, \bibinfo{number}{1} (\bibinfo{year}{1961}), \bibinfo{pages}{8--37}.
\newblock


\bibitem[Zelch et~al\mbox{.}(2024)]%
        {zelch_user_2024}
\bibfield{author}{\bibinfo{person}{Ines Zelch}, \bibinfo{person}{Matthias Hagen}, {and} \bibinfo{person}{Martin Potthast}.} \bibinfo{year}{2024}\natexlab{}.
\newblock \showarticletitle{A {User} {Study} on the {Acceptance} of {Native} {Advertising} in {Generative} {IR}}. In \bibinfo{booktitle}{\emph{Proceedings of CHIIR}}. \bibinfo{publisher}{ACM}, \bibinfo{address}{Sheffield United Kingdom}, \bibinfo{pages}{142--152}.
\newblock
\showISBNx{9798400704345}
\urldef\tempurl%
\url{https://doi.org/10.1145/3627508.3638316}
\showDOI{\tempurl}


\bibitem[Zhai and Lafferty(2001)]%
        {Zhai+Lafferty:01a}
\bibfield{author}{\bibinfo{person}{Chengxiang Zhai} {and} \bibinfo{person}{John~D. Lafferty}.} \bibinfo{year}{2001}\natexlab{}.
\newblock \showarticletitle{A Study of Smoothing Methods for Language Models Applied to Ad Hoc Information Retrieval}. In \bibinfo{booktitle}{\emph{Proceedings of SIGIR}}. \bibinfo{pages}{334--342}.
\newblock


\bibitem[Zhao et~al\mbox{.}(2023)]%
        {zhao_survey_2023}
\bibfield{author}{\bibinfo{person}{Wayne~Xin Zhao}, \bibinfo{person}{Kun Zhou}, \bibinfo{person}{Junyi Li}, \bibinfo{person}{Tianyi Tang}, \bibinfo{person}{Xiaolei Wang}, \bibinfo{person}{Yupeng Hou}, \bibinfo{person}{Yingqian Min}, \bibinfo{person}{Beichen Zhang}, \bibinfo{person}{Junjie Zhang}, \bibinfo{person}{Zican Dong}, \bibinfo{person}{Yifan Du}, \bibinfo{person}{Chen Yang}, \bibinfo{person}{Yushuo Chen}, \bibinfo{person}{Zhipeng Chen}, \bibinfo{person}{Jinhao Jiang}, \bibinfo{person}{Ruiyang Ren}, \bibinfo{person}{Yifan Li}, \bibinfo{person}{Xinyu Tang}, \bibinfo{person}{Zikang Liu}, \bibinfo{person}{Peiyu Liu}, \bibinfo{person}{Jian-Yun Nie}, {and} \bibinfo{person}{Ji-Rong Wen}.} \bibinfo{year}{2023}\natexlab{}.
\newblock \bibinfo{title}{A {Survey} of {Large} {Language} {Models}}.
\newblock
\newblock


\end{thebibliography}
\end{document}